\theoremstyle{plain}
\newtheorem{theorem}{Theorem}
\newtheorem{corollary}[theorem]{Corollary}
\theoremstyle{definition}
\newtheorem{remark}[theorem]{Remark}
\newcommand{\pdt}{\partial_t}
\begin{document}
\title{{On diffusive stability of Eigen's quasispecies model}}

\author{Alexander S. Bratus$^\textrm{1,2}$, Chin-Kun Hu$^\textrm{3}$, Mikhail V. Safro$^{\textrm{2}}$, Artem S. Novozhilov$^{\textrm{4},}$\footnote{Corresponding author: artem.novozhilov@ndsu.edu}\\[3mm]
\textit{\normalsize $^\textrm{\emph{1}}$Faculty of Computational Mathematics and Cybernetics,}\\[-1mm]
\textit{\normalsize Lomonosov Moscow State University, Moscow 119992, Russia}\\[2mm]
\textit{\normalsize $^\textrm{\emph{2}}$Applied Mathematics--1, Moscow State University of Railway Engineering,}\\[-1mm]\textit{\normalsize Moscow 127994, Russia}\\[2mm]
\textit{\normalsize $^\textrm{\emph{3}}$Institute of Physics, Academia Sinica, Nankang, Taipei 11529, Taiwan}\\[2mm]
\textit{\normalsize $^\textrm{\emph{4}}$Department of Mathematics, North Dakota State University, Fargo, ND 58108, USA}}

\date{}

\maketitle

\begin{abstract}
Eigen's quasispecies system with explicit space and global regulation is considered. Limit behavior and stability of the system in a functional space under perturbations of a diffusion matrix with nonnegative spectrum are investigated. It is proven that if the diffusion matrix has only positive eigenvalues then the solutions of the distributed system converge to the equilibrium solution of the corresponding local dynamical system. These results imply that many of the properties of the quasispecies model, including the critical mutation rates that specify the infamous error threshold, do not change if the spatial interactions under the principle of global regulation are taken into account.

\paragraph{\small Keywords:} Eigen's quasispecies model, reaction-diffusion systems, diffusive instability
\paragraph{\small AMS Subject Classification:} Primary:  35K57, 35B35, 91A22; Secondary: 92D25
\end{abstract}

\section{Introduction}
Molecular evolution is a complex nonlinear process that requires mathematical modeling to properly estimate mutual influence of the evolutionary forces such as mutation and selection. One of the simplest mathematical models describing the process of molecular evolution is due to Manfred Eigen \cite{eigen1971sma}, see also \cite{baake1999,eigen1989mcc,jainkrug2007}. It is worth noting from the very beginning that the classical models of molecular evolution, including the Eigen model, are usually written as systems of ordinary differential equations (ODE), and therefore describe mean field dynamics. It is well known that explicit spatial structure can significantly change the model behavior, see, e.g., \cite{novozhilov2013replicator,dieckmann2000}; therefore, it is important to evaluate the consequences of spatial variables. We stress that without careful analysis of the resulting systems it is impossible to predict beforehand whether inclusion of spatial variables would imply changes in the system dynamics.

There are a number of studies that offer different approaches to molecular evolution models with spatially dependent interactions (see, e.g., \cite{boerlijst1991sws,boerlijst1993ecs,chacon1995sdm,cronhjort2000ibr,cronhjort1994hvp,czaran2000crp,dieckmann2000,wilke2005quasispecies}). Here we propose a general analytical framework along the lines we considered elsewhere \cite{bratus2011,novozhilov2013replicator,novozhilov2012reaction}, to tackle the Eigen model with explicit spatial structure under global regulation principle. We start with the classical definition of the Eigen model.

Consider a population of individuals (macromolecules) each of which belongs to one of $k$ different types. Assume that the fitnesses (the replication rates) for these types are given by the vector $\alpha=(\alpha_1,\ldots,\alpha_k)$ and denote $A$ the diagonal $k\times k$ matrix with $\alpha_i,\,i=1,\ldots,k$ on the main diagonal. Assume that upon reproduction for a macromolecule of type $i$ it is possible to produce individuals of type $j$, and the probability of this event is $q_{ji}$; hence
$q_{ii} = 1 - \sum_{j \neq i}q_{ji}$ is the probability that no mutations occur at the reproduction. Let $Q$ be the stochastic matrix with entries $q_{ij}$.

If $n_i(t)$ denotes the absolute size of the population of the $i$-th type macromolecules, then, taking into account reproduction events with different fitnesses and mutations, assuming that the time is continuous in the model, after simple bookkeeping we obtain
\begin{equation}\label{eq0:1}
\dot n_i = \sum_{j = 1}^k q_{ij} \alpha_j n_j, \quad i=1,\ldots,k,
\end{equation}
where the dot means the derivative with respect to the variable $t$. System \eqref{eq0:1} can be written in a concise matrix form as
\begin{equation}\label{eq3}
\dot n=QAn=Q_\alpha n,
\end{equation}
where we put $Q_\alpha=QA$ and $n(t)=\bigl(n_1(t),\ldots,n_k(t)\bigr)^\top$. Note that for simplicity we suppress the dependence on $t$ in the differential equations we consider.

It is customary to consider not the absolute sizes of the corresponding types, but the frequencies
$$
w_i(t) = \frac{n_i(t)}{\sum_{j = 1}^kn_j(t)}\,,\quad i=1,\ldots,k,
$$
such that
\begin{equation}\label{eq4}
\sum_{i = 1}^k w_i(t) = 1,
\end{equation}
and vector $w(t)=\bigl(w_1(t),\ldots,w_k(t)\bigr)^\top$ belongs to the simplex $S_k=\{x\in\textbf{R}^k\colon x\geq 0,\,\sum_{i=1}^kx_i=1\}$.
For the new variables we obtain Eigen's quasispecies system \cite{eigen1989mcc}
\begin{equation}\label{eq5}
\dot w_i =\sum_{j=1}^k \alpha_jq_{ij}w_j - w_if^l(t), \quad i=1,\ldots,k,
\end{equation}
with the initial conditions
$$
w(0) = \xi=(\xi_1,\ldots,\xi_k)\in S_k.
$$
Here $f^l(t)$ is the mean population fitness:
\begin{equation}\label{eq6}
f^l(t) = (\alpha, w(t) )=\sum_{i=1}^k \alpha_iw_i.
\end{equation}
Throughout the text we use $(\cdot\,,\cdot)$ to denote the standard inner product in $\textbf R^k$.

The mean fitness satisfies the following nonlinear equation
\begin{equation}\label{eq7}
\frac{df^l}{dt} = (\alpha,\dot w) = (Q_{\alpha}w,\alpha) - \bigl(f^l(t)\bigr)^2,
\end{equation}
where \eqref{eq5} was used.

The only rest point of dynamical system \eqref{eq5} can be found as the solution of the following problem
\begin{equation}\label{eq8}
(Q_{\alpha} - If^l(t))w = 0,
\end{equation}
where $I$ is the identity $k\times k$ matrix.

Due to the fact that the matrix $Q_\alpha$ can be assumed to be primitive, then, invoking the Perron--Frobenius theorem, we have that the unique solution of \eqref{eq8} is given by the positive eigenvector $\hat w = (\hat w_1,\ldots,\hat w_k)^\top\in S_k$ of $Q_\alpha$ that corresponds to the dominant positive eigenvalue $\bar{f}^l = (\hat w,\alpha )$. Note that $\bar{f}^l$ is also the rest point of \eqref{eq7}, because \eqref{eq7} yields that
$(Q_{\alpha}\hat w,\alpha) = \bar{f}^l( \hat{w},\alpha ) =(\bar{f}^l)^2.$

The two major findings of Eigen, who analyzed model \eqref{eq5}, were that the evolution works not on separate types of macromolecules, but on the vectors composed of the mutant clouds; in evolutionary process the winner is not the type with the maximal fitness, as it would happen if all $q_{ij}=0$, but the eigenvector $\hat w$ corresponding to the dominant eigenvalue. This vector was called the \textit{quasispecies}, hence the name of the model. The second important observation is the presence of so-called \textit{error threshold} (for more details see, e.g., \cite{bratus2013linear,wilke2005quasispecies}), which predicts that for many fitness matrices $A$ for the replication rates there exists a critical mutation rate over which the distribution of frequencies of different types of macromolecules becomes uniform. That implies that no statistical method can determine which type has the maximal fitness (for a discussion see, e.g., \cite{baake1999,baake2001mutation,bnp2010,bratus2013linear,bull2005quasispecies,jainkrug2007,saakian2011lethal,saakian2006ese,wilke2005quasispecies}). The discussion on the nature, exact definition and importance of the error threshold for the real populations is ongoing, and we do not plan to touch on these issues here (see our related work in \cite{bratus2013linear}). Instead, we concentrate on the following problem: How to incorporate a spatial structure into the model \eqref{eq5}, and what consequences can be expected if the spatial interactions are modeled.

The rest of the paper is organized as follows. In the next section we discuss our approach to add space to model \eqref{eq5} and state, in a nontechnical language, the main results of our analysis. In Section \ref{sec3} we specify the mathematical problem, the functional spaces in which we are looking for a solution, and discuss in which sense we use the word ``solution.'' Sections \ref{sec4} and \ref{sec5} provides the details of the proof of the main theorem, including some more subtle corollaries. In Section \ref{sec6} we illustrate our findings numerically.

\section{Eigen's quasispecies model with diffusion}
It is a usual approach in mathematical modeling (e.g., \cite{cantrell2003spatial,okubo2002diffusion}) to add the spatial structure to a system of ODE
\begin{equation}\label{eq1:1}
    \dot n=g(n),\quad n\in U\subseteq \textbf R^k,\,g\colon U\rightarrow \textbf R^k
\end{equation}
by considering the following system of partial differential equations (PDE)
\begin{equation}\label{eq1:2}
    \pdt n=g(n)+D\Delta n,\quad n=n(x,t),\,x\in\Omega\subset \textbf R^m,
\end{equation}
where $\pdt n=\frac{\partial n}{\partial t}$, $D=(d_{ij})_{k\times k}$ is the matrix of diffusion coefficients, $m=1,2,$ or $3$ depending on the dimension of the considered physical space (line, plane or space) and $\Delta$ is the Laplace operator. System \eqref{eq1:2} should be supplemented with appropriate initial and boundary conditions. It is well known that the diffusion can be a destabilizing force in a sense that an asymptotically stable rest point of \eqref{eq1:1} can become an unstable spatially uniform rest point of \eqref{eq1:2} through the Turing bifurcation \cite{evans_2010,turing1990chemical}. The exact conditions whether the Turing bifurcation occurs depend on a particular system. Therefore, we are interested in a possibility of a Turing bifurcation for a spatial analogue to system \eqref{eq5}.

An important observation here is that the general approach that leads from \eqref{eq1:1} to \eqref{eq1:2}, however, will not work for the quasispecies model \eqref{eq5}, since $w(t)\in S_k$ represents the vector of frequencies and the dispersal depends on the variation in numbers. The only case when one can add the Laplace operator is to assume that all of the diffusion coefficients are equal, clearly a very restrictive and unrealistic case (see \cite{hadeler1981dfs} for a particular case of such situation).

There are several possible approaches to circumvent the mentioned obstacle and to add explicit spatial structure to \eqref{eq5}, for a short review see \cite{novozhilov2012reaction}. We believe that the correct approach is to follow the lines how the system for the absolute sizes \eqref{eq0:1} was transformed into the system for the frequencies \eqref{eq5}, (this corresponds to the general ideology of the theory of selection systems \cite{karev2010}). Using this approach we arrive (for the details see Section \ref{sec3}) at the equation
\begin{equation}\label{eq1:3}
    \pdt u_i=\sum_{j=1}^k \alpha_jq_{ij}u_j-u_if^s(t)+\sum_j d_{ij}\Delta u_j,\quad x\in\Omega\subseteq\textbf R^m,\quad i=1,\ldots, k
\end{equation}
for the new variable $u(x,t)=\bigl(u_1(x,t),\ldots,u_k(x,t)\bigr)^\top$. In \eqref{eq1:3} we use the notation
\begin{equation}\label{eq1:4}
    f^s(t)=\int_\Omega (\alpha,u)\,dx
\end{equation}
for the mean integral fitness that guarantees that in the domain $\Omega$ the principle of global regulation

\begin{equation}\label{eq1:4a}
\sum_{i=1}^k\int_\Omega u_i(x,t)\,dx=1
\end{equation}
holds for any time moment $t$.

We stress that equation \eqref{eq1:3} is not a usual partial differential equation, but a functional--partial differential equation that includes an integral functional \eqref{eq1:4} and isoperimetric condition \eqref{eq1:4a}. This problem belongs to a not-that-well studied class of mathematical problems.

This approach was used in \cite{bratus2006ssc,bratus2009existence,bratus2011} for various classes of replicator equations. For the Eigen model \eqref{eq5} this approach in the case when space has one dimension was applied for the first time in \cite{weinberger1991ssa}, and it was argued that the spatially uniform steady state for the local model \eqref{eq5} remains asymptotically stable for the case of system \eqref{eq1:3} when the diffusion matrix is diagonal. We generalize this statement for an arbitrary dimension of the physical space and any diffusion matrix $D$ that has positive eigenvalues. The main statement is given below, and the rest of the paper is devoted to the proof of it.

\begin{theorem}\label{th:1} Consider simultaneously the local Eigen quasispecies model \eqref{eq5} and the spatially explicit model \eqref{eq1:3} with zero Neumann boundary conditions (the boundary of $\Omega$ is impenetrable). Let domain $\Omega$ be bounded and have a piecewise smooth boundary, and diffusion matrix $D$ have only distinct positive eigenvalues. Then
\begin{enumerate}
\item There exits only one steady state solution to \eqref{eq1:3}, and it coincides with the rest point $\hat w$ of~\eqref{eq5};
\item This steady state solution $\hat w$ is globally stable.
\end{enumerate}
\end{theorem}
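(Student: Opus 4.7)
The plan is to decompose $u(x,t)$ into its spatial mean and a zero-mean fluctuation and handle the two parts separately. With $|\Omega|$ the volume of the domain, set $\bar u(t)=|\Omega|^{-1}\int_\Omega u(x,t)\,dx$ and integrate \eqref{eq1:3} over $\Omega$: the zero Neumann condition annihilates each $\int_\Omega\Delta u_j\,dx$, the constraint \eqref{eq1:4a} reads $|\Omega|\sum_i\bar u_i=1$, and the rescaling $w:=|\Omega|\bar u\in S_k$ turns $f^s(t)=|\Omega|(\alpha,\bar u)$ into $(\alpha,w)=f^l(t)$. Hence $w$ satisfies the local Eigen system \eqref{eq5}, and the established convergence $w(t)\to\hat w$ for that system yields $\bar u(t)\to\hat w/|\Omega|$ together with $f^s(t)\to\bar f^l$.

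Next I introduce $\tilde u(x,t):=u(x,t)-\bar u(t)$, which has zero spatial mean, and subtract the ODE for $\bar u$ from \eqref{eq1:3}. Because $f^s(t)$ is already determined by $\bar u$, the equation
$$
\pdt\tilde u_i=\sum_j\alpha_jq_{ij}\tilde u_j-\tilde u_if^s(t)+\sum_jd_{ij}\Delta\tilde u_j
$$
is linear in $\tilde u$. Expanding $\tilde u$ in an orthonormal basis $\{\phi_n\}_{n\geq 1}$ of eigenfunctions of $-\Delta$ under Neumann conditions, with eigenvalues $\lambda_n>0$ (the constant mode drops out since $\int_\Omega\tilde u\,dx\equiv 0$), each Fourier coefficient $c^{(n)}(t)\in\textbf R^k$ obeys $\dot c^{(n)}=\bigl(Q_\alpha-\lambda_nD-f^s(t)I\bigr)c^{(n)}$. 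The scalar integrating factor $\exp\bigl(\int_0^tf^s(s)\,ds\bigr)$ absorbs the $-f^s(t)I$ term and leaves an autonomous flow driven by $Q_\alpha-\lambda_nD$; comparing growth rates reduces exponential decay of every mode to the spectral inequality $\mu(\lambda_n)<\bar f^l$, where $\mu(\lambda)$ is the largest real part of the spectrum of $Q_\alpha-\lambda D$.

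The principal obstacle lies in establishing $\mu(\lambda)<\bar f^l$ for all $\lambda>0$. At $\lambda=0$, primitivity of $Q_\alpha$ gives $\mu(0)=\bar f^l$ as a simple eigenvalue with strictly positive right and left eigenvectors $\hat w$ and $\hat w^*$, and standard perturbation theory yields $\mu'(0)=-(\hat w^*,D\hat w)/(\hat w^*,\hat w)$, which we will have to verify is strictly negative. At the other extreme, since $D$ has strictly positive spectrum, $-\lambda D$ dominates as $\lambda\to\infty$ and drives $\mu(\lambda)\to-\infty$. The delicate part is to prevent $\mu(\lambda)$ from returning to the value $\bar f^l$ at any intermediate $\lambda$; the hypothesis that the eigenvalues of $D$ are distinct (so $D$ is diagonalizable without Jordan blocks) should make it possible to rule out such crossings via a continuity argument on the characteristic polynomial of $Q_\alpha-\lambda D$ as $\lambda$ varies.

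Given the modal decay, both conclusions of the theorem follow. Uniqueness: any steady state $u^*$ satisfies $(Q_\alpha-\lambda_n D-\bar f^l I)c^{(n)*}=0$ for each $n\geq 1$, and $\mu(\lambda_n)<\bar f^l$ makes that matrix invertible, forcing $\tilde u^*=0$ and hence $u^*=\hat w/|\Omega|$. Global asymptotic stability then combines the convergence of the spatial mean inherited from \eqref{eq5} with the exponential decay of the fluctuation, which by Parseval translates into decay of $\tilde u$ in the natural $L^2(\Omega;\textbf R^k)$ norm.
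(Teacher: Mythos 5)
Your setup --- splitting off the spatial mean, identifying it with the solution of the local system \eqref{eq5}, and expanding the zero-mean fluctuation in Neumann eigenfunctions so that each Fourier coefficient satisfies $\dot c^{(n)}=(Q_\alpha-\lambda_nD-f^s(t)I)c^{(n)}$ --- is exactly the paper's decomposition, and your integrating-factor reduction to the autonomous flow of $Q_\alpha-\lambda_nD$ is a clean substitute for the paper's appeal to the perturbation lemma for $\dot y=(A+B(t))y$. But the proof is not complete: everything hinges on the spectral inequality $\mu(\lambda_n)<\bar f^l$, which you explicitly defer (``the principal obstacle,'' ``the delicate part'') and for which you offer only the hope that a continuity argument on the characteristic polynomial ``should make it possible to rule out such crossings.'' That inequality \emph{is} the theorem --- it is precisely the assertion that no Turing instability occurs --- so leaving it unproved leaves a hole at the center of the argument. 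Moreover, the route you sketch is genuinely delicate: the coefficient $\mu'(0)=-(\hat w^*,D\hat w)/(\hat w^*,\hat w)$ is a bilinear pairing of two positive vectors against a generally nonsymmetric matrix, and positivity of the eigenvalues of $D$ does not by itself force $(\hat w^*,D\hat w)>0$; nor does distinctness of the eigenvalues of $D$ obviously prevent an eigenvalue of $Q_\alpha-\lambda D$ from re-crossing the line $\Re z=\bar f^l$ at some intermediate $\lambda$. You need an actual argument here, not a continuity heuristic.

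The paper avoids eigenvalue perturbation altogether. After conjugating by the matrix $U$ that diagonalizes $D$ (so $U^{-1}DU=M$ with positive diagonal entries), it bounds the quadratic form of $A=U^{-1}Q_\alpha U-(\bar f^lI+\lambda_jM)$ via the two inequalities $((U^{-1}Q_\alpha U)Y,Y)\le\bar f^l\|Y\|^2$ and $((\bar f^lI+\lambda_jM)Y,Y)>\bar f^l\|Y\|^2$, giving $(AY,Y)<0$; this single estimate yields both the nonexistence of nontrivial solutions of \eqref{eq24} (uniqueness of the steady state) and a Lyapunov function $\tfrac12\|Y\|^2$ for the decay of each mode in \eqref{eq29}. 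To finish your proof you should either establish $\mu(\lambda)<\bar f^l$ directly or pass to a numerical-range bound of this type --- bearing in mind that bounding the numerical range of the nonnormal matrix $U^{-1}Q_\alpha U$ by its spectral radius is itself a nontrivial claim that deserves justification, so this step requires your full attention in either formulation.
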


\begin{remark}
The convergence for the solutions of system \eqref{eq1:3} to the rest point $\hat{w}$ is understood as the convergence in an appropriate functional space, the details are presented in Sections \ref{sec3} and \ref{sec4}.
\end{remark}

\begin{remark}
Model \eqref{eq5} is not the only possible mathematical reincarnation of the Eigen quasispecies model. Another, and somewhat more justified approach to deduce a model that is described by a system of ODE is to assume that the mutations and reproduction events are separated, and reproduction proceeds error-free. In this case, it is possible to show \cite{baake1999} that the system takes the form
\begin{equation}\label{eq1:5}
    \dot w_i=(m_i-f^l(t))w_i+\sum_{i=1}^k\mu_{ij}w_j,\quad i=1,\ldots,k,
\end{equation}
where now $m_i$ are Malthusian fitnesses (the reproduction rates) and $\mu_{ij}$ are the mutation rates, $\mu_{ii}=-\sum_{j\neq i}\mu_{ji}$, and the mean fitness is given by
$$
f^l(t)=\sum_{i=1}^km_iw_i.
$$
Systems \eqref{eq5} and \eqref{eq1:5} have very close properties \cite{hofbauer1985selection}. Theorem \ref{th:1} (with necessary modifications) remains valid for system \eqref{eq1:5} as well.
\end{remark}

\section{Notation and derivation of the Eigen model with diffusion}\label{sec3}
In this section we present a general approach to derive a spatially explicit counterpart of the classical Eigen's quasispecies model \eqref{eq5}.

Let $\Omega \subset \textbf{R}^m,\, m=1,2,3$ be a bounded domain with a piecewise-smooth boundary $\gamma$. Denote $n(x,t)$ vector-function that describes absolute sizes of different macromolecules composing the system; in coordinates $n(x,t) = \bigl(n_1(x,t),\ldots,n_k(x,t)\bigr)^\top$, $t \geq 0$, $x \in \Omega$ in space $\textbf{R}^m$. Consider matrix $D=(d_{ij})_{k\times k}$ with positive eigenvalues $\mu_i>0$, $i=1,\ldots,k$ and assume that there exists a matrix $U$ such that
\begin{equation}\label{eq9}
D=UM U^{-1},
\end{equation}
where $M$ is a diagonal matrix with $\mu_i$ on the main diagonal. The diffusion matrix $D$ describes the action of spatial diffusion on macromolecules; note that we do not require matrix $D$ be diagonal, and hence include in consideration possible cross diffusion interactions that describe a directed movement of macromolecules with respect to an external signal (which can be, e.g., chemical, see for more detail \cite{berezovskaya2008families,hillen2009user}).

Following the standard approach (e.g., \cite{cantrell2003spatial}) the dynamics of the populations of macromolecules can be described with help of the following PDE
\begin{equation}\label{eq10}
\pdt n_i(x,t) = (Q_\alpha n(x,t))_i + (D \Delta n(x,t))_i,\quad i=1,\ldots,k.
\end{equation}
Here $\Delta$ is the Laplace operator, in Cartesian coordinates
$$
\Delta = \sum_{i=1}^m \frac{\partial^2}{\partial x_i^2}\,,\quad m=1,2,3,
$$
$Q_\alpha$ is the same matrix as in equality \eqref{eq3}, and the following notations are used in the rest of the paper:
\begin{equation*}
\begin{split}
(Q_\alpha n(x,t))_i &= \sum\nolimits_{j=1}^kq_{ij}\alpha_{j}n_{j}(x,t),\\
(D \Delta n(x,t))_i &= \sum\nolimits_{j = 1}^{k}d_{ij}\Delta n_j(x,t).
\end{split}
\end{equation*}
The initial data for the concentrations of macromolecules are given by
$$
n_i(x , 0) = \Psi_i(x), \quad i=1,\ldots,k.
$$
It is natural to assume that we consider a closed system, i.e., we have zero flow boundary condition (the Neumann boundary conditions)
$$
\left.\frac{\partial n_i(x,t)}{\partial \nu} \right|_{x\in\gamma} = 0, \quad i=1,\ldots,k,
$$
where $\nu$ is the outward normal to the boundary $\gamma$.

We would like to rewrite system \eqref{eq10} for the relative concentrations. Consider
\begin{equation}\label{eq11}
u_i(x,t) = \frac{n_i(x,t)}{\sum_{j=1}^k \int_{\Omega} n_j(x,t)\, dx}\,, \quad i=1,\ldots,k.
\end{equation}
We obtain equality \eqref{eq4} in the form of the following integral invariant
\begin{equation}
\label{eq12}
\langle(u,1_k)\rangle = \sum \limits_{i=1}^k \int \limits_{\Omega} u_i(x,t)dx=1,
\end{equation}
where the following notation
$$
\langle (u,1_k) \rangle =\int_{\Omega} (n(x,t),1_k)\,dx, \quad 1_k = (1,\ldots,1)\in \textbf R^k
$$
was used; in words, we use the angle brackets $\langle F \rangle$ to denote the integral of $F$ over $\Omega$, and the parentheses, or round brackets $(\cdot\,,\cdot)$ to denote the standard inner product in $\textbf{R}^k$.

From (\ref{eq11}) it follows that
\begin{equation*}
\begin{split}
\pdt u_i &= \frac{\pdt n_i \langle(n,1_k)\rangle - n_i \langle (\pdt n,1_k  )\rangle}{\langle(n,1_k)\rangle^2} =\\
&=\frac{\left((Q_{\alpha}n)_i + (D\Delta n)_i \right)\langle(n,1_k)\rangle - n_i \langle(Q_{\alpha}n + D\Delta n, 1_k)\rangle}{\langle(n,1_k)\rangle^2}\,,\quad i=1,\ldots,k.
\end{split}
\end{equation*}
After simplification, we obtain the following equations
\begin{equation}\label{eq13}
\pdt u_i = (Q_\alpha u)_i -u_i f^s(t) + (D \Delta u)_i,\quad i=1,\ldots,k,
\end{equation}
with the initial and boundary conditions
\begin{equation}\label{eq14}
u_i(x,0) = \phi_i(x), \quad \left.\frac{\partial u_i(x,t)}{\partial \nu} \right|_{x\in\gamma} = 0,\quad t\geq 0,\, i = 1,\ldots,k .
\end{equation}
Function $f^s(t)$ is the average integral fitness of the system and is given explicitly by
\begin{equation}\label{eq15}
f^s(t) = \langle (u(x,t), \alpha) \rangle.
\end{equation}
For obtaining the last expression we used boundary conditions (\ref{eq14}) and Green's formula
$$
\int_{\Omega} \Delta u_i(x,t)dx = \int_{\gamma} \frac{\partial u_i(x,t)}{\partial \nu} ds = 0, \quad i=1,\ldots,k.
$$

Suppose that for any fixed $t$ each function $u_i(x,t)$ is differentiable with respect to $t$ and belongs to the Sobolev space $W^{2,m}(\Omega)$ if $m=1,2$ or $W^{2,2}(\Omega)$ if $m=3$ as the function of $x$ for any fixed $t \geq 0$ . Here $W^{m,2}$, $m=1,2$ is the space of functions, which have square integrable derivatives with respect to variables $x_i$ up to the order $m$. We require this condition because due to the embedding theorem we have that any function from the space $W^{s,2}(\Omega)$ is continuous, except possibly on the set of measure zero (see, e.g., \cite{evans_2010} for details).

Denote $\Omega_t = \Omega \times [0,+\infty)$ and consider the space $B(\Omega_t)$ of functions with the norm
$$
\|v\|_{B} = \max_{t>0}\left\{\|v\|_{W^{s,2}} + \|\pdt v\|_{W^{s,2}}\right\}.
$$
Denote $S_k\left(\Omega_t\right)$ the set of non-negative vector-functions $u(x,t),$ such that $u_i(x,t)\in B\left(\Omega_t\right),\,i=1,\ldots,k$ that satisfy (\ref{eq12}).
In the following we look for a weak solution to (\ref{eq13}), (\ref{eq14}), i.e., the solutions that satisfy the integral identity
$$
\int_{0}^{\infty}\int_{\Omega}\pdt u_i\eta(x,t)dxdt = \int_{0}^{\infty}\int_{\Omega}(Q_{\alpha}u(x,t))_i \eta(x,t) dxdt - \int_{0}^{\infty}\int_{\Omega}(D \nabla u, \nabla \eta)_idxdt
$$
for any function $\eta(x,t)$ on compact support in variable $t$, which is differentiable on $[0, +\infty)$ with respect to $t$ and belongs to the Sobolev space $W^{s,2}(\Omega)$ for any fixed $t \geq 0,\, s=1,2$.

We remark that system (\ref{eq13}) is not a usual system of PDE because its right hand side contains the functional $f^s(t)$ on solutions $u_i(x,t)$. Nevertheless, it is possible to consider this system in the classical sense if we add the following ODE for the average integral fitness that can be obtained similarly to equation (\ref{eq7})
\begin{equation}
\label{eq16}
\frac{df^s}{dt}(t) = \langle(Q_{\alpha}u, \alpha) \rangle - \bigl(f^s(t)\bigr)^2 .
\end{equation}
Steady state solutions to (\ref{eq13}) can be found as solutions to the following PDE of elliptic type:
\begin{equation}\label{eq17}
\left(D \Delta v(x) \right)_i + (Q_{\alpha}v(x))_i - v_i(x) \bar{f}^s = 0,\quad i=1,\ldots,k,
\end{equation}
with the boundary conditions
\begin{equation}
\label{eq18}
\left.\frac{\partial v_i(x)}{\partial \nu} \right|_{x\in\gamma} = 0, \quad i = 1,\ldots,k.
\end{equation}
The integral invariant \eqref{eq12} now reads
\begin{equation}\label{eq19}
\langle (v(x),1_k) \rangle = \sum_{i=1}^k \int_{\Omega} v_i(x)dx = 1.
\end{equation}
Denote by $S_k(\Omega)$ the set of all non-negative vector functions $v(x) = \bigl(v_1(x),\ldots,v_k(x)\bigr)^\top$, $v_i(x) \in W^{s,2}(\Omega)$ ($s=1$ or $s=2$) that satisfy (\ref{eq19}). Using (\ref{eq15}) and (\ref{eq19}) we obtain that
$$
\bar{f}^s = \langle(v(x), \alpha) \rangle,
$$
i.e., $\bar{f}^s$ is a constant.

\section{Steady state solution to the distributed Eigen's system}\label{sec4}
In this section we prove the first part of Theorem \ref{th:1}.

\begin{theorem}\label{theorem1}
If the diffusion matrix $D$ has only positive eigenvalues $\mu_i$, $i=1,...,n$, then the steady state solution to the system \eqref{eq17}, \eqref{eq18} in $S_k\left(\Omega \right)$ coincides with the rest point $\hat{w}$ of the local system \eqref{eq5}.
\end{theorem}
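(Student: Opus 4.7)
The plan is to reduce the steady-state problem \eqref{eq17}--\eqref{eq18} to a family of purely algebraic equations, one per Neumann Laplacian mode on $\Omega$, and then to rule out every nonzero mode by a matrix-spectral argument. The target is to show that any $v\in S_k(\Omega)$ solving the system must be the spatially constant vector $\hat w$.

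First, I would integrate each component of $(D\Delta v)_i+(Q_\alpha v)_i-\bar f^s v_i=0$ over $\Omega$. By Green's formula and the zero-flux condition \eqref{eq18}, the diffusion term vanishes, leaving the vector identity $Q_\alpha V=\bar f^s V$ for $V=\int_\Omega v(x)\,dx\in\mathbb R^k$. The invariant \eqref{eq19} says $(V,1_k)=1$ and $v\in S_k(\Omega)$ gives $V\ge0$, so primitivity of $Q_\alpha$ and the Perron--Frobenius theorem force $V=\hat w$ and $\bar f^s=\bar f^l$. Setting $r(x)=v(x)-\hat w$ and using $Q_\alpha\hat w=\bar f^l\hat w$, I get that $r$ satisfies the homogeneous linear system
\[
D\Delta r+(Q_\alpha-\bar f^l I)r=0\ \text{in}\ \Omega,\qquad \partial r/\partial\nu|_\gamma=0,\qquad \int_\Omega r\,dx=0,
\]
and the theorem reduces to showing $r\equiv 0$.

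Next, I would expand $r$ in the $L^2(\Omega)$-orthonormal basis $\{\psi_n\}_{n\ge0}$ of Neumann Laplacian eigenfunctions on $\Omega$, with eigenvalues $0=\lambda_0<\lambda_1\le\lambda_2\le\cdots$. The mean-zero constraint kills the $n=0$ term, so $r(x)=\sum_{n\ge1}c_n\psi_n(x)$ with $c_n\in\mathbb R^k$. Substituting into the elliptic system and equating Fourier coefficients reduces the PDE to the sequence of finite-dimensional algebraic equations
\[
\bigl(Q_\alpha-\bar f^l I-\lambda_n D\bigr)c_n=0,\qquad n\ge 1.
\]

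The main obstacle is the remaining spectral claim: $Q_\alpha-\bar f^l I-\lambda D$ must be nonsingular for every $\lambda>0$. This is where the hypothesis that $D$ has distinct positive eigenvalues enters. The plan is to diagonalize $D=UMU^{-1}$ with $M=\mathrm{diag}(\mu_1,\ldots,\mu_k)$, pass to $\tilde Q=U^{-1}Q_\alpha U$, and analyze the polynomial $p(\lambda)=\det(\tilde Q-\bar f^l I-\lambda M)$. It vanishes at $\lambda=0$ because $\bar f^l$ is a simple dominant Perron eigenvalue of $Q_\alpha$, and the eigenvalue branch $\mu(\lambda)$ of $Q_\alpha-\lambda D$ emerging from $\bar f^l$ has first-order slope $\mu'(0)=-(y,D\hat w)/(y,\hat w)$, where $y>0$ is the left Perron eigenvector. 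Combining the local behaviour near $\lambda=0$ with the large-$\lambda$ regime, where the spectrum of $Q_\alpha-\lambda D$ is driven to $-\infty$ by $-\lambda M$, and using the distinctness of the $\mu_i$ to prevent branch crossings, one argues that no eigenvalue branch can return to the value $\bar f^l$ at any positive $\lambda$. Once this nonsingularity is established, $c_n=0$ for every $n\ge 1$, hence $r\equiv 0$ and $v\equiv\hat w$.
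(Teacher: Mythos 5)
Your setup matches the paper's: both arguments expand the steady state in the Neumann eigenfunctions of the Laplacian on $\Omega$, identify the zero mode with the Perron eigenvector $\hat w$ (hence $\bar f^s=\bar f^l$), and reduce the theorem to showing that $\bigl(Q_\alpha-\bar f^l I-\lambda_n D\bigr)c_n=0$ forces $c_n=0$ for every $n\ge 1$. The difference --- and the problem --- lies in how you handle that last algebraic step, which is the entire mathematical content of the statement.

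Your argument for the nonsingularity of $Q_\alpha-\bar f^l I-\lambda D$ for all $\lambda>0$ is not a proof. Knowing that the Perron branch leaves the value $\bar f^l$ at $\lambda=0$ and that the spectrum of $Q_\alpha-\lambda D$ is driven to $-\infty$ as $\lambda\to\infty$ does not exclude a branch rising back to $\bar f^l$ at some intermediate $\lambda$; that is precisely the mechanism of diffusion-driven (Turing) instability, which does occur for non-symmetric reaction matrices even with positive definite diagonal diffusion. The clause about ``using the distinctness of the $\mu_i$ to prevent branch crossings'' is a non sequitur: distinctness of the eigenvalues of $D$ neither prevents eigenvalue collisions of $Q_\alpha-\lambda D$ nor has any bearing on whether some branch attains $\bar f^l$ (and in the paper distinctness plays no role beyond guaranteeing the diagonalization $D=UMU^{-1}$). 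You also have not controlled the sign of $\mu'(0)=-(y,D\hat w)/(y,\hat w)$: since $D$ is assumed to have positive eigenvalues but not positive entries, $(y,D\hat w)$ could a priori be negative, so even your local statement near $\lambda=0$ is unestablished. The paper closes this gap by a quadratic-form estimate rather than branch continuation: after the substitution $C^j=UY^j$ it bounds $\bigl((U^{-1}Q_\alpha U)Y^j,Y^j\bigr)\le\bar f^l\|Y^j\|^2$ while $\bigl((\bar f^l I+\lambda_j M)Y^j,Y^j\bigr)>\bar f^l\|Y^j\|^2$, so the two sides of \eqref{eq25} cannot agree for $Y^j\ne 0$. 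To complete your route you would need a quantitative argument of that kind (control of the numerical range or of the resolvent of $Q_\alpha$ above $\bar f^l$), not a continuity-of-branches heuristic.
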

\begin{proof} Consider the following boundary eigenvalue problem
\begin{equation}\label{eq20}
\Delta \psi(x) = -\lambda \psi(x), \quad \left.\frac{\partial \psi}{\partial \nu} \right|_{x\in \gamma} = 0.
\end{equation}
The eigenfunction system is given by $\psi_0(x) = 1, \{\psi_j(x) \}_{j=1}^{\infty}$ and forms a complete system in the Sobolev space $W^{s,2} \left(\Omega \right)$; moreover, we have
\begin{equation}\label{eq21}
\int_{\Omega}\psi_i(x)\psi_j(x)\,dx = \delta_{ij},
\end{equation}
where $\delta_{ij}$ is the Kronecker symbol. The corresponding eigenvalues are known to satisfy
$$\lambda_0 = 0 < \lambda_1 \leq \lambda_2 \leq ...\leq\lambda_j \leq \ldots,\quad  \lim_{j \to \infty}\lambda_j =+\infty.$$
Consider solution $v(x) = \bigl(v_1(x),\ldots,v_k(x)\bigr)^\top\in S_k(\Omega)$ to the system \eqref{eq17}, \eqref{eq18}. Assume that this solution has the following form
\begin{equation}\label{eq22}
v_i(x) = c_i^0 + \sum_{j=1}^{\infty}c_i^j \psi_j(x), \quad i = 1,\ldots,k,
\end{equation}
which is always possible since the eigenfunctions $\{\psi_j(x) \}_{j=0}^{\infty}$ of \eqref{eq20} form a complete system in $W^{s,2}(\Omega)$. Here $c_i^0,c_i^j$ are constants, $i=1,\ldots, k$, $j = 1,2,\ldots$

We substitute \eqref{eq22} into \eqref{eq17} and multiply consecutively the resulting expression by the eigenfunctions of \eqref{eq20}. After integrating over $\Omega$ we obtain a system of linear equations for $c_i^j$
\begin{equation}\label{eq23}
(Q_{\alpha}C^0)_i - \bar{f}^sc_i^0 =0, \quad i = 1,\ldots,k,\\
\end{equation}
\begin{equation}\label{eq24}
(Q_{\alpha}C^j)_i - \bar{f}^s c_i^j - (D \lambda_j C^j)_i=0, \quad j = 1,2,\ldots
\end{equation}
where $\lambda_j$ are the eigenvalues of \eqref{eq20}, $C^{0} =(c_1^0,\ldots,c_k^0)$ and $C^j = (c_1^j,\ldots,c_k^j)$ are vectors in $\textbf{R}^k$, $j=1,2,\ldots$ Here we essentially used the orthogonality of the eigenfunctions $\psi_j(x)$.

The integral invariant \eqref{eq19} and the average integral fitness take the form
$$
\sum_{i=1}^k c_i^0 = 1, \quad \bar{f}^s = \sum_{i=1}^k \alpha_i c_i^0.
$$
Equation \eqref{eq23} has the same form as the equation for the rest point of system \eqref{eq5}. Therefore, the unique solution of that equation coincides with the coordinates of rest point, i.e. $C^0 = \hat{w}$, which implies
$$
\bar{f}^s = \bar{f}^l.
$$

Now consider system \eqref{eq24}. Since $D$ satisfies \eqref{eq9}, then there exists a non-singular matrix $U$ such that $U^{-1}DU = M$, where $M$ is a diagonal matrix with the entries $\mu_1, \ldots,\mu_k$ on the main diagonal.
Consider the transformation
$$
C^j = U Y^j, \quad j=1,2,\ldots
$$
where $Y^j=(y_1^j,...,y_k^j)$ is a vector in $\textbf{R}^k$. Equations \eqref{eq24} take the following form
\begin{equation}\label{eq25}
R_{\alpha}Y^j = (\bar{f}^l I + \lambda_j M )Y^j, \quad j=1,2,\ldots
\end{equation}
The eigenvalues of the matrix $R_{\alpha}=U^{-1}Q_{\alpha}U$ and matrix $Q_{\alpha}$ coincide.

Using the fact that $\bar{f}^l$ is the maximal eigenvalue of the matrix $Q_{\alpha}$ we obtain
$$
((U^{-1}Q_{\alpha}U )Y^j,Y^j ) \leq \bar{f}^l\|Y^j\|^2.
$$
By the assumptions, the eigenvalues $\mu_i$ of the diffusion matrix $D$ are positive and hence
$$
((\bar{f}^l I + \lambda_jM)Y^j, Y^j) > \bar{f}^l \|Y^j\|^2.
$$
Therefore, equation \eqref{eq25} has no nontrivial solutions. It implies that the only solution to equation \eqref{eq23} is exactly the rest point of the local system \eqref{eq5}.
\end{proof}

\begin{remark}
The case when at least one of the eigenvalues of $D$ is zero is more intricate. Assume, without loss of generality, that $\mu_1=0$, then it is possible that
$$
(MY^j,Y^j )=0,\quad j=1,2,\ldots.
$$
Therefore, the equation \eqref{eq25} can have a countable set of solutions.

Consider the following example. Suppose that the rest point of \eqref{eq5} is an eigenvector of $D$ with the zero eigenvalue: $D\hat{w}=0$. Then the vector functions
$$
v^j(x) = \frac{1}{\left|\Omega\right|}\hat{w}\left(1 + \psi_j(x)\right),\quad j=1,2,\ldots
$$
are the solutions to the system \eqref{eq17},\eqref{eq18}. Here $\left|\Omega\right|$ is the measure of $\Omega$. Indeed, we have
\begin{equation*}
\begin{split}
D\Delta v^j + Q_{\alpha}v^j - I\bar{f}^lv^j &= \frac{1}{|\Omega|}(D \hat{w} )\psi^j(x) = 0,\\
\frac{1}{|\Omega|}\sum_{i=1}^k \int_{\Omega}v_i^j(x)dx &= \frac{1}{|\Omega|}\sum_{i=1}^k \int_{\Omega}\hat{w}_idx=1.
\end{split}
\end{equation*}
\end{remark}

\section{Stability of the spatially homogeneous solution to the distributed Eigen's system}\label{sec5}
In this section we provide a proof to the second part of Theorem \ref{th:1}.

For the following we will need the statement: If the system
$\dot x = Ax(t)$ with a constant matrix $A$ has bounded solutions when $t\to\infty$ (i.e., the origin is stable) then the system
$\dot y = ( A + B(t)) y(t)$ also has bounded solutions if $\int_{0}^\infty\|B(t)\|dt<\infty$; the proof can be found in, e.g.,~\cite{verhulst1996nonlinear}.

\begin{theorem}
\label{theorem2}
If the eigenvalues of the diffusion matrix $D$ are positive then for any initial data $\phi_i(x) \in S_n\left( \Omega\right)$ solutions to the system \eqref{eq13}, \eqref{eq14} converge in $S_n(\Omega_t)$ to the solutions of the system \eqref{eq5} with the initial data
\begin{equation}
\label{eq26}
\xi_i = \int_{\Omega}\phi_i(x)dx,\quad i=1,\ldots,k.
\end{equation}
\end{theorem}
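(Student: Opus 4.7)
The plan is to project equation \eqref{eq13} onto the Neumann Laplacian eigenbasis used in the proof of Theorem \ref{theorem1}, decoupling the dynamics mode by mode: writing $u_i(x,t) = c_i^0(t) + \sum_{j\geq 1} c_i^j(t)\psi_j(x)$, one treats the spatial mean $c^0(t)$ and the oscillatory coefficients $C^j(t) = (c_1^j(t),\ldots,c_k^j(t))^\top$ ($j\geq 1$) separately, and then reassembles to obtain convergence in the functional space $S_k(\Omega_t)$.

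First, I would integrate \eqref{eq13} over $\Omega$. By the Neumann condition \eqref{eq14} and Green's formula the diffusion term vanishes, and since $f^s(t) = (\alpha, c^0(t))$ one obtains that $c^0(t) = \int_\Omega u(x,t)\,dx$ satisfies the local Eigen system \eqref{eq5} \emph{exactly}, with initial datum $\xi_i = \int_\Omega \phi_i(x)\,dx$ as in \eqref{eq26}. Invoking the known global stability of $\hat w$ for \eqref{eq5}, this yields $c^0(t)\to\hat w$ and hence $f^s(t)\to \bar f^l$ along the trajectory.

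Second, for each $j\geq 1$ I would substitute the eigenfunction expansion into \eqref{eq13}, multiply by $\psi_j$ and integrate, using the orthonormality \eqref{eq21}, to obtain the linear non-autonomous ODE system
\begin{equation*}
\dot C^j(t) = \bigl(Q_\alpha - f^s(t) I - \lambda_j D\bigr) C^j(t),\qquad j\geq 1.
\end{equation*}
Set $Y^j = U^{-1}C^j$ with $D = UMU^{-1}$; then $\dot Y^j = (R_\alpha - f^s(t) I - \lambda_j M) Y^j$. Using the same Rayleigh-type inequality $(R_\alpha Y,Y) \leq \bar f^l\|Y\|^2$ that was exploited in the proof of Theorem \ref{theorem1}, together with $\mu_{\min} := \min_i\mu_i > 0$ and $\lambda_j\geq\lambda_1>0$, the energy identity gives
\begin{equation*}
\frac{d}{dt}\|Y^j(t)\|^2 \leq 2\bigl(\bar f^l - f^s(t) - \lambda_j\mu_{\min}\bigr)\|Y^j(t)\|^2.
\end{equation*}
Since $f^s(t)\to\bar f^l$ by the first step, there exists $T_0$ such that $\bar f^l - f^s(t) < \lambda_1\mu_{\min}/2$ for all $t\geq T_0$, producing the uniform-in-$j$ exponential estimate $\|Y^j(t)\|^2 \leq \|Y^j(T_0)\|^2 e^{-\lambda_j\mu_{\min}(t-T_0)}$. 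An alternative route uses the auxiliary perturbation lemma stated just before the theorem, with $A = R_\alpha-\bar f^lI-\lambda_j M$ (asymptotically stable by Theorem \ref{theorem1}) and $B(t) = -(f^s(t)-\bar f^l)I$, provided one can upgrade boundedness to decay via the quantitative speed at which $f^s$ tends to $\bar f^l$.

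Finally, I would reassemble. Parseval in the Neumann basis gives $\|u(\cdot,t) - c^0(t)\|_{W^{s,2}}^2 \asymp \sum_{j\geq 1} (1+\lambda_j)^s \|C^j(t)\|^2$, and the matrix $U$ being non-singular lets one replace $\|C^j\|$ by $\|Y^j\|$ up to constants; since $u(\cdot,T_0)\in W^{s,2}$ the initial sum $\sum_j (1+\lambda_j)^s\|Y^j(T_0)\|^2$ is finite, and the exponential factor $e^{-\lambda_j\mu_{\min}(t-T_0)}$ dominates the polynomial Sobolev weight, so the series tends to zero as $t\to\infty$. The same argument applied to $\dot C^j$ via the ODE controls $\|\pdt u\|_{W^{s,2}}$, giving convergence in the norm of $B(\Omega_t)$ and hence in $S_k(\Omega_t)$. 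The main obstacle, in my view, is the last step: justifying the $W^{s,2}$-level summation (termwise differentiation, uniformity of the exponential rate in $j$, and simultaneous control of $\pdt u$) rigorously, rather than the mode-by-mode analysis itself; a secondary delicate point is ensuring that the transient period $[0,T_0]$ preserves enough regularity of $u(\cdot,T_0)$ to make the weighted sum at time $T_0$ finite, which requires parabolic smoothing estimates for \eqref{eq13}.
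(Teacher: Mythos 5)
Your proposal follows essentially the same route as the paper: the same Neumann-eigenfunction expansion, the identification of $c^0(t)$ with the solution $w(t)$ of \eqref{eq5} with initial data \eqref{eq26} (hence $f^s=f^l$), the conjugation $Y^j=U^{-1}C^j$ that diagonalizes $D$, and the same Rayleigh-type inequalities $(R_\alpha Y,Y)\le \bar f^l\|Y\|^2$, $(MY,Y)\ge \mu_{\min}\|Y\|^2$. The differences are in how the mode equations \eqref{eq29} are closed out. The paper writes $\dot Y^j=(A+B(t))Y^j$ with $A=R_\alpha-\bar f^lI-\lambda_jM$ exponentially stable (via the Lyapunov function $\tfrac12\|X^j\|^2$) and $B(t)=(\bar f^l-f^l(t))I$, then uses the perturbation lemma quoted before the theorem together with $|\hat w_i-w_i(t)|\le Ke^{-\kappa t}$ to get $\int_0^\infty\|B(t)\|\,dt<\infty$ and concludes $Y^j(t)\to 0$; your direct Gr\"onwall estimate after a time $T_0$ is an equally valid and more quantitative variant, and your observation that the lemma as literally stated yields only boundedness (so that one needs its decay version, or the explicit variation-of-constants estimate) is a fair point about the paper's wording. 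Where you go beyond the paper is the final Parseval reassembly: the paper stops at the mode-by-mode conclusion and does not address the summation over $j$ in the $W^{s,2}$-based norm of $S_k(\Omega_t)$, so the ``main obstacle'' you flag (uniformity of the decay rate in $j$, termwise control of $\pdt u$, regularity at $T_0$) is a genuine gap in the published argument that your sketch at least names and partially addresses via the uniform rate $\lambda_j\mu_{\min}$. One caveat you inherit from the paper rather than introduce: the inequality $(R_\alpha Y,Y)\le\bar f^l\|Y\|^2$ bounds the numerical range of the non-symmetric matrix $R_\alpha$ by its spectral radius, which is not automatic and deserves justification in both arguments.
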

\begin{proof}As in the proof of Theorem \ref{theorem1} we look for the solution $u_i(x,t) \in S_n\left(\Omega_t\right)$ of \eqref{eq13},\eqref{eq14} in the form of the following expansion
\begin{equation}
\label{eq27}
u_i(x,t) = c_i^0(t) + \sum_{j=1}^{\infty}c_i^j(t)\psi_j(x), \quad i = 1,\ldots,k.
\end{equation}
Here $c_i^j(t)$ are functions of $t$, $i = 1,\ldots,k$, $ j = 1,2,\ldots$. This expansion is possible because the system of eigenfunctions $\{\psi_j(x) \}_{j=0}^{\infty}$ forms a complete system in $W^{s,2}(\Omega),\, s=1,2$. We obtain the following system of ODE for the unknowns $c_i^j(t)$
\begin{equation}\label{eq28}
\frac{d c_i^0}{dt}(t) = (Q_{\alpha}c^0(t))_i - f^s(t)c_i^0(t),\\
\end{equation}
\begin{equation}
\label{eq29}
\frac{dc_{i}^{j}}{dt}(t) = (Q_{\alpha}c^j(t))_i - f^s(t) c_i^j(t) - (D \lambda_jc^j(t))_i,
\end{equation}
where $\lambda_j$ are the eigenvalues of \eqref{eq20}; $c^0(t) = (c_1^0(t),\ldots, c_k^0(t) )$, $c^j(t) = (c_1^j(t),\ldots, c_n^j(t) ).$
The integral fitness \eqref{eq15} has the form
$$
f^s(t) = \sum_{i=1}^k \alpha_i c_i^0(t).
$$
Since
$$
\int_{\Omega}\phi_i(x)dx = \int_{\Omega}u_i(x,0)dx = \xi_i,
$$
then $c_i^0 = \xi_i$, $i = 1,\ldots,k$. From the uniqueness theorem it follows that the solution to the systems \eqref{eq5} and \eqref{eq28} coincide, i.e. $c_i^0(t) = w_i(t)$ and therefore
\begin{equation}\label{eq30}
f^s(t) = f^l(t).
\end{equation}
Now consider system \eqref{eq29}. As in the proof of Theorem \ref{theorem1} consider the transformation $U$ which reduces the diffusion matrix to the diagonal form. System \eqref{eq29} takes the form
$$
\frac{dY^j}{dt} = (U^{-1}Q_{\alpha}U)Y^j(t) - (f^l(t)I + \lambda_jM)Y^j(t),\quad j=1,2,\ldots
$$
where $M$ is the diagonal matrix. Using
$$
f^l(t) = (f^l(t) - \bar{f}^l) + \bar{f}^l,
$$
we obtain
$$
\frac{dY^j}{dt} =AY^j(t) + B(t)Y^j(t),
$$
where $A$ is the constant matrix
$$
A = (U^{-1}Q_{\alpha}U) - (\bar{f}^sI + \lambda_jM),
$$
$B(t)$ is a diagonal matrix with the elements
$$
b_{ii}(t) = \bar{f}^l - f^l(t),\quad i=1,\ldots,k.
$$
Recall that $f^l(t)$ is given by \eqref{eq6} and $\bar{f}^l=(\hat{w},\alpha)$.
To investigate the system's behavior for
$$
\frac{dX^j}{dt} =AX^j(t), \quad X^j(t) \in \textbf{R}^k
$$
we introduce the following Lyapunov functions
$$
V(X^j ) = \frac{1}{2}(X^j,X^j)= \frac 12\|X^j\|^2.
$$
One has
$$
\dot{V}(X^j) = (AX^j,X^j) =( (U^{-1}Q_{\alpha}U )X^j,X^j) - (( f^lI + \lambda_jM)X^j,X^j).
$$
On the other hand,
\begin{equation*}
\begin{split}
((U^{-1} Q_{\alpha}U )X^j, X^j) &\leq \bar{f}^l \| X^j \|^2,\\
((f^lI + \lambda_jM)X^j, X^j) &> \bar{f}^l\| X^j\|^2.\\
\end{split}
\end{equation*}
Therefore the equilibrium $X^j = 0$ is asymptotically and exponentially stable.

Now estimate the following integral
\begin{equation}\label{eq31}
\int_{0}^{\infty}|\bar{f}^l - f^l(t)|\,dt.
\end{equation}
It can be reduced to
$$
\int_{0}^{\infty}|\hat{w}_i - w_i(t)|\,dt, \quad i = 1,\ldots,k,
$$
where $w_i(t)$ are the coordinates of the solution to the equation \eqref{eq5} and $\hat{w}_i$ are the coordinates of the rest point of \eqref{eq5}.

Using the well known results for the convergence of the solutions to the rest point of \eqref{eq5} (e.g., \cite{jones1976tsc})  we obtain
that there exist positive constants $K$ and $\kappa$ such that
$$|\hat{w}_i - w_i(t)| \leq K e^{-\kappa t}.$$
Therefore the integral \eqref{eq31} converges. It implies that the solutions to the system \eqref{eq29} tend to zero when $t \to \infty$. This completes the proof.
\end{proof}

\begin{corollary}
If at least one eigenvalue of the diffusion matrix $D$ is equal to zero then it is possible to have a countable set of solution $u^j(x,t) \in S_n\left(\Omega_t\right)$, $j=1,2,\ldots$ to the system \eqref{eq13}, \eqref{eq14}. These solutions converge to the solutions of the dynamical system \eqref{eq5} in the sense of average integral value, i.e.,
$$
\lim_{t \to \infty}\int_{\Omega}u_i^j(x,t)dx \to {w}_i(t), \quad i=1,...,k, \quad j=1,2,\ldots
$$
where $w(t)$ is the solution of the system \eqref{eq5}.
\end{corollary}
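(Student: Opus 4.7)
The plan is to recycle the Fourier-decomposition apparatus of Theorems \ref{theorem1} and \ref{theorem2} and then read off the corollary by tracking what changes when one eigenvalue of $D$ is switched off. First I would expand any candidate solution as
\begin{equation*}
u_i(x,t) = c_i^0(t) + \sum_{j\ge 1} c_i^j(t)\psi_j(x),\quad i=1,\ldots,k,
\end{equation*}
in the eigenfunctions of the Neumann Laplacian from \eqref{eq20}. The zero mode again satisfies system \eqref{eq28}, which is just the local Eigen system \eqref{eq5}, so by uniqueness $c^0(t)=w(t)$ with initial data \eqref{eq26}, regardless of $D$. Since $\int_\Omega \psi_j(x)\,dx = 0$ for $j\ge 1$, the identity
\begin{equation*}
\int_\Omega u_i(x,t)\,dx = c_i^0(t) = w_i(t)
\end{equation*}
follows for every $t$ and every admissible solution, which delivers the integral-mean statement of the corollary (the displayed ``$\lim$'' being best read as an equality valid for all $t$).

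The countable family then comes from the higher modes. After $C^j = UY^j$, equation \eqref{eq29} becomes
\begin{equation*}
\dot Y^j = \bigl(U^{-1}Q_\alpha U - f^l(t)I - \lambda_j M\bigr)Y^j,
\end{equation*}
with $M=\mathrm{diag}(\mu_1,\ldots,\mu_k)$ and $\mu_1=0$. In Theorem \ref{theorem2} the diagonal piece $-\lambda_j M$ was the source of the Lyapunov estimate forcing $Y^j\to 0$; setting $\mu_1=0$ removes this damping from the first coordinate, so trajectories aligned with the zero-eigenvalue direction $Ue_1$ of $D$ persist. For each $j\ge 1$ I would then choose initial data $\phi^j(x)$ whose only non-zero Fourier coefficient beyond the zeroth mode lies along $Ue_1$ and multiplies exactly the eigenfunction $\psi_j$. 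The resulting $u^j(x,t)\in S_n(\Omega_t)$ are pairwise distinct, as they carry distinct spatial frequencies, yielding the countable family. In the special situation of the remark following Theorem \ref{theorem1}, where additionally $\hat w\in\ker D$, these solutions take the explicit form $u^j(x,t) = w(t) + \gamma_j(t)\hat w\,\psi_j(x)$ with $\gamma_j$ governed by a scalar ODE, specializing at steady state to the $v^j$ constructed there.

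The main obstacle will be verifying that the constructed $u^j$ genuinely lies in $S_n(\Omega_t)$. The integral invariant \eqref{eq12} is automatic from $\int_\Omega\psi_j\,dx = 0$, but the pointwise non-negativity $u^j_i(x,t)\ge 0$ is not, and it is the real constraint. I would enforce it by choosing the amplitude along the kernel direction small enough that $|c^j_i(t)\psi_j(x)| < w_i(t)$ uniformly in $(x,t)$, leveraging the Perron--Frobenius lower bound on $w_i(t)$ together with the $L^1$ estimate $\int_0^\infty|f^l(t)-\bar f^l|\,dt<\infty$ from Theorem \ref{theorem2} to keep the persistent mode uniformly bounded in $t$. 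A secondary subtlety is that $\ker D$ need not be $Q_\alpha$-invariant, so $c^j(t)$ can leak into damped directions; but only the first coordinate of $Y^j(t)$ needs to remain non-trivial for countability to go through, and that coordinate is precisely the one insulated from the $\lambda_j M$ damping.
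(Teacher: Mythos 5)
Your proposal follows essentially the same route as the paper: expand in the Neumann eigenfunctions of \eqref{eq20}, observe that the zero mode satisfies \eqref{eq28} and hence equals $w(t)$ by uniqueness, so that $\int_\Omega u_i\,dx = w_i(t)$ for every $t$ (the averaged claim holds as an identity, as you note), and manufacture the countable family from higher modes supported on the kernel of $D$. In the case the paper actually treats --- the hypothesis of the Remark after Theorem \ref{theorem1}, namely $D\hat w=0$, so that the zero-eigenvalue direction of $D$ coincides with the Perron eigenvector of $Q_\alpha$ --- your $u^j = w(t)+\gamma_j(t)\hat w\,\psi_j(x)$ is exactly the paper's family, and you are in fact more careful than the paper on two points: the paper writes the mode amplitude as the constant $\hat w$ and verifies the equation with $\bar f^l$ in place of $f^s(t)$, whereas the correct amplitude obeys the scalar equation $\dot\gamma_j=(\bar f^l-f^l(t))\gamma_j$ and only tends to a nonzero limit; and the paper never checks nonnegativity, which, as you say, requires taking that amplitude small enough.

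The one step that does not hold up is the claim that for a general zero-eigenvector direction $Ue_1$ the first coordinate of $Y^j$ is ``insulated from the damping.'' A vanishing diagonal entry of $\lambda_j M$ does not prevent that coordinate from decaying: $U^{-1}Q_\alpha U$ couples it to the damped coordinates, and the Lyapunov estimate used in Theorem \ref{theorem2} still gives $\frac{d}{dt}\|Y^j\|^2\le 0$, with neutrality only along directions that are simultaneously neutral for $\lambda_j M$ and for $Q_\alpha-\bar f^l I$. That is precisely why the paper imposes the alignment condition $D\hat w=0$; without it the undamped coordinate can still leak away entirely, and your family, while countable and convergent in the averaged sense (which, as you observe, is true of every solution and so is not the real content here), need not remain spatially inhomogeneous. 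To match what the corollary is actually asserting, restrict the construction to the paper's special case, where the persistence of the $\psi_j$-mode follows from the decoupled scalar equation for $\gamma_j$ together with $\int_0^\infty|\bar f^l-f^l(t)|\,dt<\infty$.
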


Indeed in this case we cannot guarantee convergence the solutions of \eqref{eq29} to zero. Consider a countable set of vector functions
$$
u^j(x,t) = \frac{1}{|\Omega|}(\omega(t)+\hat{\omega}\psi_j(x)),
$$
\begin{equation*}
\begin{split}
DU^j(x,t)+Q_{\alpha}U^j(x,t) - I\bar{f}^lu^j(x,t) - \pdt u^j &=\\
\frac{1}{|\Omega|}((Q_{\alpha}\omega(t) - I\bar{f}^l\omega(t) - \pdt w)+(D \hat\omega + Q_{\alpha}\hat \omega-I\bar{f}^l\hat \omega)\psi_j(x)) &=0,
\end{split}
\end{equation*}
since $D\hat{\omega}=0$. Additionally,
$$
\sum_{i=1}^{k} \int_{\Omega}u_i^j(x,t)dx=\frac{1}{|\Omega|}\sum_{i=1}^{k} \int_{\Omega} \omega_i(t)dx = \sum_{i=1}^{k}\omega_i(t) = 1.
$$

\section{Numerical calculations}\label{sec6}

We illustrate the results of analytical calculations with the following example. Consider the population of macromolecules that have length $L=4$ and composed of ones and zeros. Assuming that the corresponding fitnesses depend only on the amount of zeroes and not on their positions (the fitness landscape is \textit{permutation invariant}), the mutations are independent of the site position and occur with probability $q$ per individual per position per replication, then the whole quasispecies system can be rewritten as $L+1$ dimensional system with the variables $u(x,t)=(u_1,\ldots,u_5)$.

Numerical solutions for such system are shown in Fig.~\ref{fig1}. In particular, the top panel gives the time evolution of the spatially independent frequencies $\bar{u}_i(t)=\int_\Omega u_i(x,t)dt$, and the bottom panels show how spatially non-homogeneous initial conditions become homogeneous under the influence of spatial diffusion. The overall picture is in agreement with the analytical investigation from the previous sections; the solutions approach the values $\hat{w}$ of the rest point of the local Eigen system, which can be found as the coordinate of the positive eigenvector.

\begin{figure}[!t]
\begin{center}
\includegraphics[width=0.75\textwidth]{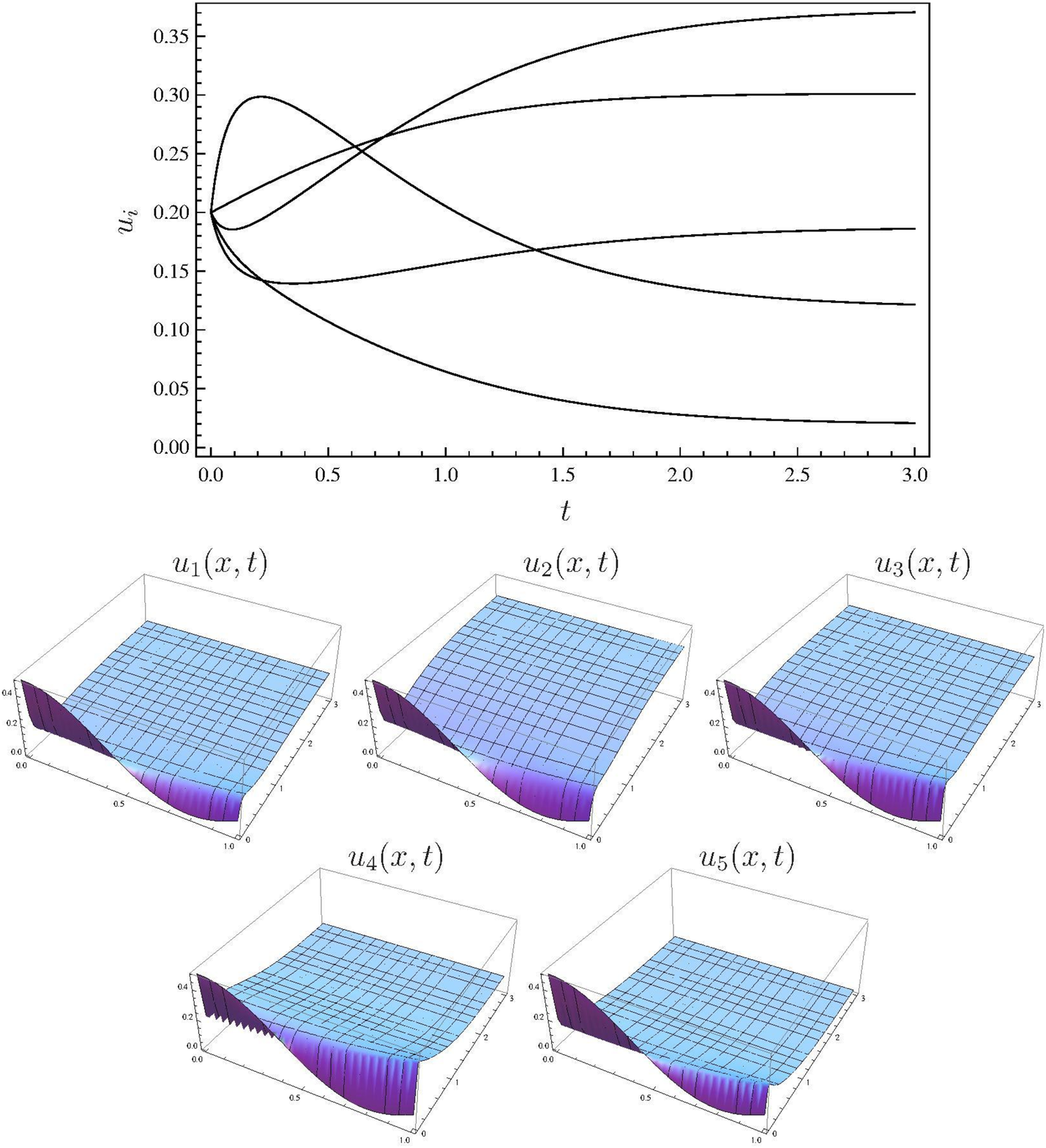}\\
\end{center}
\caption{Numerical solutions to the distributed Eigen's quasispecies model (see text for details)}\label{fig1}
\end{figure}

\paragraph{Acknowledgements:} This research is supported in part by the Russian Foundation for Basic Research (RFBR) grant \#10-01-00374 and joint
grant between RFBR and Taiwan National Council \#12-01-92004HHC-a. ASN's research is supported in part by ND EPSCoR and NSF grant \#EPS-0814442.


\begin{thebibliography}{10}

\bibitem{baake1999}
E.~Baake and W.~Gabriel.
\newblock {Biological evolution through mutation, selection, and drift: An
  introductory review}.
\newblock In D.~Stauffer, editor, {\em {Annual Reviews of Computational Physics
  VII}}, pages 203--264. World Scientific, 1999.

\bibitem{baake2001mutation}
E.~Baake and H.~Wagner.
\newblock Mutation--selection models solved exactly with methods of statistical
  mechanics.
\newblock {\em Genetical research}, 78(1):93--117, 2001.

\bibitem{berezovskaya2008families}
F.~S. Berezovskaya, A.~S. Novozhilov, and G.~P. Karev.
\newblock Families of traveling impulses and fronts in some models with
  cross-diffusion.
\newblock {\em Nonlinear Analysis: Real World Applications}, 9(5):1866--1881,
  2008.

\bibitem{boerlijst1991sws}
M.~C. Boerlijst and P.~Hogeweg.
\newblock {Spiral wave structure in pre-biotic evolution: Hypercycles stable
  against parasites}.
\newblock {\em Physica D}, 48(1):17--28, 1991.

\bibitem{boerlijst1993ecs}
M.~C. Boerlijst, M.~E. Lamers, and P.~Hogeweg.
\newblock {Evolutionary consequences of spiral waves in a host-parasitoid
  system}.
\newblock {\em Proceedings of the Royal Society of London. Series B: Biological
  Sciences}, 253(1336):15--18, 1993.

\bibitem{bnp2010}
A.~S. Bratus, A.~S. Novozhilov, and Platonov~A. P.
\newblock {\em Dynamical systems and models in biology}.
\newblock Fizmatlit, 2010.
\newblock ({\it in Russian}).

\bibitem{bratus2013linear}
A.~S. Bratus, A.~S. Novozhilov, and Y.~S. Semenov.
\newblock {Linear algebra of the permutation invariant Crow--Kimura model of
  prebiotic evolution}.
\newblock {\em arXiv preprint arXiv:1306.0111}, 2013.

\bibitem{bratus2006ssc}
A.~S. Bratus and V.~P. Posvyanskii.
\newblock {Stationary solutions in a closed distributed Eigen--Schuster
  evolution system}.
\newblock {\em Differential Equations}, 42(12):1762--1774, 2006.

\bibitem{bratus2009existence}
A.~S. Bratus, V.~P. Posvyanskii, and A.~S. Novozhilov.
\newblock {Existence and stability of stationary solutions to spatially
  extended autocatalytic and hypercyclic systems under global regulation and
  with nonlinear growth rates}.
\newblock {\em Nonlinear Analysis: Real World Applications}, 11:1897--1917,
  2010.

\bibitem{bratus2011}
A.~S. Bratus, V.~P. Posvyanskii, and A.~S. Novozhilov.
\newblock {A note on the replicator equation with explicit space and global
  regulation}.
\newblock {\em Mathematical Biosciences and Engineering}, 8(3):659--676, 2011.

\bibitem{novozhilov2013replicator}
A.~S. Bratus, V.~P. Posvyanskii, and Novozhilov~A. S.
\newblock Replicator equations and space.
\newblock {\em arXiv preprint arXiv:1308.5631}, 2013.

\bibitem{bull2005quasispecies}
J.~J. Bull, L.~A. Meyers, and M.~Lachmann.
\newblock Quasispecies made simple.
\newblock {\em PLoS Computational Biology}, 1(6):e61, 2005.

\bibitem{cantrell2003spatial}
R.~S. Cantrell and C.~Cosner.
\newblock {\em {Spatial ecology via reaction-diffusion equations}}.
\newblock Wiley, 2003.

\bibitem{chacon1995sdm}
P.~Chac{\'o}n and J.~C. Nu{\'n}o.
\newblock {Spatial dynamics of a model for prebiotic evolution}.
\newblock {\em Physica D}, 81(4):398--410, 1995.

\bibitem{cronhjort2000ibr}
M.~B. Cronhjort.
\newblock {The interplay between reaction and diffusion}.
\newblock In U.~Dieckmann, R.~Law, and J.~A.~J. Metz, editors, {\em {The
  Geometry of Ecological Interactions: Simplifying Spatial Complexity}}, pages
  151--170. {Cambridge University Press}, 2000.

\bibitem{cronhjort1994hvp}
M.~B. Cronhjort and C.~Blomberg.
\newblock {Hypercycles versus parasites in a two dimensional partial
  differential equation model}.
\newblock {\em Journal of Theoretical Biology}, 169(1):31--49, 1994.

\bibitem{czaran2000crp}
T.~Cz{\'a}r{\'a}n and E.~Szathm{\'a}ry.
\newblock {Coexistence of replicators in prebiotic evolution}.
\newblock In U.~Dieckmann, R.~Law, and J.~A.~J. Metz, editors, {\em {The
  Geometry of Ecological Interactions: Simplifying Spatial Complexity}}, pages
  116--134. {Cambridge University Press}, 2000.

\bibitem{dieckmann2000}
U.~Dieckmann, R.~Law, and J.~A.~J. Metz.
\newblock {\em {The Geometry of Ecological Interactions: Simplifying Spatial
  Complexity}}.
\newblock {Cambridge University Press}, 2000.

\bibitem{eigen1971sma}
M.~Eigen.
\newblock Selforganization of matter and the evolution of biological
  macromolecules.
\newblock {\em Naturwissenschaften}, 58(10):465--523, 1971.

\bibitem{eigen1989mcc}
M.~Eigen, J.~McCascill, and P.~Schuster.
\newblock {The Molecular Quasi-Species}.
\newblock {\em Advances in Chemical Physics}, 75:149--263, 1989.

\bibitem{evans_2010}
L.~C. Evans.
\newblock {\em {Partial Differential Equations}}.
\newblock {American Mathematical Society}, 2nd edition, 2010.

\bibitem{hadeler1981dfs}
K.~P. Hadeler.
\newblock {Diffusion in Fisher's population model}.
\newblock {\em Rocky Mountain Journal of Mathematics}, 11:39--45, 1981.

\bibitem{hillen2009user}
T.~Hillen and K.~J. Painter.
\newblock A user's guide to pde models for chemotaxis.
\newblock {\em Journal of Mathematical Biology}, 58(1):183--217, 2009.

\bibitem{hofbauer1985selection}
J.~Hofbauer.
\newblock The selection mutation equation.
\newblock {\em Journal of Mathematical Biology}, 23(1):41--53, 1985.

\bibitem{jainkrug2007}
K.~Jain and J.~Krug.
\newblock {Adaptation in Simple and Complex Fitness Landscapes}.
\newblock In U.~Bastolla, M.~Porto, H.~Eduardo~Roman, and M.~Vendruscolo,
  editors, {\em Structural approaches to sequence evolution}, chapter~14, pages
  299--339. Springer, 2007.

\bibitem{jones1976tsc}
B.~L. Jones, R.~H. Enns, and S.~S. Rangnekar.
\newblock {On the theory of selection of coupled macromolecular systems}.
\newblock {\em Bulletin of Mathematical Biology}, 38(1):15--28, 1976.

\bibitem{karev2010}
G.~P. Karev, A.~S. Novozhilov, and F.~S. Berezovskaya.
\newblock On the asymptotic behavior of the solutions to the replicator
  equation.
\newblock {\em Mathematical Medicine and Biology}, 28(2):89--110, 2011.

\bibitem{novozhilov2012reaction}
A.~S. Novozhilov, V.~P. Posvyanskii, and A.~S. Bratus.
\newblock On the reaction--diffusion replicator systems: spatial patterns and
  asymptotic behaviour.
\newblock {\em Russian Journal of Numerical Analysis and Mathematical
  Modelling}, 26(6):555--564, 2012.

\bibitem{okubo2002diffusion}
A.~Okubo and S.~A. Levin.
\newblock {\em Diffusion and ecological problems}.
\newblock Springer, 2nd edition, 2002.

\bibitem{saakian2011lethal}
D.~B. Saakian, C.~K. Biebricher, and C.~K. Hu.
\newblock Lethal mutants and truncated selection together solve a paradox of
  the origin of life.
\newblock {\em PLoS One}, 6(7):e21904, 2011.

\bibitem{saakian2006ese}
D.~B. Saakian and C.~K. Hu.
\newblock {Exact solution of the Eigen model with general fitness functions and
  degradation rates}.
\newblock {\em Proceedings of the National Academy of Sciences USA},
  103(13):4935--4939, 2006.

\bibitem{turing1990chemical}
A.~M. Turing.
\newblock The chemical basis of morphogenesis.
\newblock {\em Philosophical Transactions of the Royal Society of London.
  Series B, Biological Sciences}, 237(641):37--72, 1952.

\bibitem{verhulst1996nonlinear}
F.~Verhulst.
\newblock {\em Nonlinear differential equations and dynamical systems}.
\newblock Springer, 1996.

\bibitem{weinberger1991ssa}
E.~D. Weinberger.
\newblock {Spatial stability analysis of Eigen's quasispecies model and the
  less than five membered hypercycle under global population regulation}.
\newblock {\em Bulletin of Mathematical Biology}, 53(4):623--638, 1991.

\bibitem{wilke2005quasispecies}
C.~O. Wilke.
\newblock Quasispecies theory in the context of population genetics.
\newblock {\em BMC Evolutionary Biology}, 5(1):44, 2005.

\end{thebibliography}

\end{document}